\newtheorem{definition}{Definition}
\newtheorem{lemma}[definition]{Lemma}
\newtheorem{theorem}[definition]{Theorem}
\newtheorem{example}[definition]{Example}
\newcommand{\multiline}[1]{%
  \begin{tabularx}{\dimexpr\linewidth-\ALG@thistlm}[t]{@{}X@{}}
    #1
  \end{tabularx}
}
\newcommand{\allen}{\ensuremath{\mathcal A}}
\newcommand{\problemDef}[3]
{%
    
    \begin{tcolorbox}[arc=0.1mm,boxsep=-0.6mm,left=1.9mm,right=1.9mm,bottom=1.4mm,top=1.4mm,adjusted title={\strut \sc#1},colback=white!5]

    \noindent\textbf{Instance:} #2
    
    \noindent\textbf{Question:} #3
    \end{tcolorbox}
}
\newcommand{\Ordo}{{\mathcal{O}}}
\newcommand{\stub}{{stub}}
\newcommand{\pot}{\textsc{POT}}
\newcommand{\toop}{TOP}
\newcommand{\ptoop}{PTOP}
\newcommand{\rpartial}{\ensuremath{R_{\mathrm{par}}}}
\newcommand{\rtotal}{\ensuremath{R_{\mathrm{tot}}}}
\newcommand{\rcorrect}{\ensuremath{R_{\mathrm{corr}}}}
\tikzset{
>=stealth', 
node distance=2.4cm, 
every state/.style={thick, fill=gray!10}, 
initial text=$ $, 
}
\author{Leif Eriksson}
\address[L. Eriksson]%
   {Dep. Computer and Information
     Science, \\Link\"opings  Universitet, Sweden
   }
\email{leif.eriksson@liu.se}   
\author{Victor Lagerkvist}
\address[V. Lagerkvist]%
   {Dep. Computer and Information
     Science, \\Link\"opings  Universitet, Sweden}
\email{victor.lagerkvist@liu.se}      
\title{A Fast Algorithm for Consistency Checking Partially Ordered Time}
\begin{document}
\maketitle

\begin{abstract}
      Partially ordered  models of time occur naturally in applications  where  agents or processes cannot perfectly communicate with each other, and can be traced back to the seminal work of Lamport. In this paper we consider the problem of {\em deciding} if a (likely incomplete) description of a system of events is consistent, the {\em network consistency problem} for the {\em point algebra of partially ordered time} ({\pot}). While the classical complexity of this problem has been fully settled, comparably little is known of the {\em fine-grained} complexity of {\pot} except that it can be solved in $\Ordo^*((0.368n)^n)$ time by enumerating {\em ordered partitions}. We construct a {\em much} faster algorithm with a  run-time bounded by $\Ordo^*((0.26n)^n)$.
    This is achieved by a sophisticated enumeration of structures similar to total orders, which are then greedily expanded towards a solution.
    While similar ideas have been explored earlier for related problems it turns out that the analysis for {\pot} is non-trivial and requires significant new ideas.
\end{abstract}

\section{Introduction}\label{sec:intro}
{\em Qualitative reasoning} is an important formalism in artificial intelligence where the objective is to reason about continuous properties given certain relations between the unknown entities.
Two important subfields are {\em temporal} reasoning, e.g., the {\em point algebra for partially ordered time} ({\pot}), {\em Allen’s interval algebra} ({\allen}), and the {\em point algebra for branching time}, and {\em spatial} reasoning, e.g., the {\em region connection calculus} (RCC), the {\em cardinal direction calculus}, and the {\em rectangle algebra}.
There are numerous applications of all of these formalisms in AI, e.g., in knowledge representation~\cite{forbus2019qualitative}, linguistics~\cite{DBLP:journals/ai/Allen84,song}, and planning~\cite{DBLP:conf/kr/Allen91,allen2014reasoning,DBLP:conf/sbia/NogueiraFA96,DBLP:journals/ci/SongC96}. For a broad overview with further applications and references we refer to the survey by Dylla et al.~\cite{Dylla:2017:SQS:3058791.3038927}. 

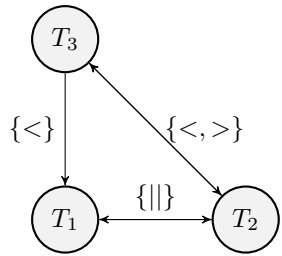
\begin{figure}
\centering
\begin{tikzpicture}
\node[state] (t1) {$T_1$};

\node[state, right of=t1] (t2) {$T_2$};
\node[state, above of=t1] (t3) {$T_3$};

\draw[->] (t3) -- (t1) node[midway, left]{$\{<\}$};
\draw[->] (t2) -- (t1) node[midway, above]{$\{||\}$};
\draw[->] (t1) -- (t2);
\draw[->] (t2) -- (t3) node[midway, right]{$\{<,>\}$};
\draw[->] (t3) -- (t2);

\end{tikzpicture}
\caption{A scenario with tree tasks $T_1, T_2, T_3$ where $T_1$ precedes $T_3$, $T_1$ and $T_2$ are incomparable, and where either $T_2$ precedes $T_3$, or $T_3$ precedes $T_2$.}
\label{fig:pot}
\end{figure}

In this paper we are interested in constructing fast (but superpolynomial) algorithms for NP-hard temporal reasoning problems, with a particular focus on the {\pot} problem. Here, the basic task is to check whether a given set of events and a set of possible relationships between them is {\em consistent} in the sense that there exists some partial ordering of the events which does not contradict any of the given relationships. Thus, this model of time is suitable in applications where we are working with agents/processes who cannot perfectly communicate with each other and where a global, totally ordered model of time is not possible, e.g., in distributed or concurrent systems~\cite{anger89,Anger:etal:flairs99,lamport86}. There are many possible formulations of the basic computational problem and we consider the setup where the possible basic relations among two points in time are less than (${<}$), greater than (${>}$), incomparable to (${||}$) and equivalent to (${=}$), i.e., the basic operations of the well-known {\em point algebra}~\cite{DBLP:journals/jacm/LadkinM94}. To make it possible to encode complex relationship between events we follow Broxwall \& Jonsson~\cite{Broxvall:Jonsson:ai2003} and allow disjunctions of the basic relations.
Crucially, disjunctions make it possible to model incomplete relationships between tasks, e.g., we are given two tasks, where one of them started the other but we lack the knowldge of which one was first. See Figure~\ref{fig:pot} for a visualization of a constraint network with three tasks (where $\{<,>\}$ means that $<$ or $>$ is true). This network is satisfiable since the tasks can be ordered as $T_1 < T_3$, $T_2 < T_3$ where $T_1$ and $T_2$ are incomparable.
Let us also remark that all of the aforementioned problems can be formulated as infinite-domain {\em constraint satisfaction problems} (CSPs) over $\omega$-categorical {\em constraint languages}~\cite{DBLP:journals/jair/BodirskyJ17}. In this framework one first fixes a set of binary {\em basic relations} $\mathcal{B}$ and then consider CSP$(\mathcal{B}^{\vee =})$ where $\mathcal{B}^{\vee =} = \bigcup_{R \in \mathcal{B}} R$  is the union of the basic relations.

For example, the {\pot}  problem can then be formulated as a CSP problem where the basic relations are formed over the {\em random partial order} (cf. Chapter 2 in Bodirsky~\cite{Bodirsky:InfDom}).

\subsection*{Related work}
Significant attention has been devoted to finding maximally tractable subclasses of these problems, typically accomplished by local consistency methods, and for e.g.\ RCC-5, RCC-8, {\allen}, and  {\pot}, all maximal, tractable classes have been identified~\cite{Dylla:2017:SQS:3058791.3038927}.
If one extends to arbitrary first-order reducts of the basic relations the satisfiability problem of {\pot} has a complexity dichotomy by Kompatscher \& Pham~\cite{pham2017}, while the corresponding problem for RCC is generally undecidable. 
For many more examples of complexity dichotomies for infinite-domain CSPs, see e.g.\ Bodirsky~\cite{Bodirsky:InfDom}  

But, naturally, we cannot be content with merely understanding the tractable fragments, since their expressive power is too restrictive to be able to model real-world problems. Hence, we need methods for solving NP-hard reasoning tasks as fast as possible.
To expand our understanding of the NP-hard cases we would thus like to (1) construct algorithms faster than exhaustive search, and (2) prove that certain types of speedups are {\em not} possible, subject to stronger complexity theoretical assumptions than {P~$\neq$~NP}.
Complexity questions like these, especially for a precise complexity parameter such as the number of variables, $n$, typically fall under the scope of {\em fine-grained complexity}.
Thus, given a reasonable set of basic relations ${\mathcal B}$, how fast can we expect to solve CSP$({\mathcal B}^{\vee =})$ (which might be NP-hard even if CSP$({\mathcal B})$ is tractable)?
Here, let us first remark that any CSP$({\mathcal B}^{\vee = })$ problem is solvable by an exhaustive backtracking algorithm in $2^{\Ordo(n^2)}$ time, under mild assumptions on the set of basic relations ${\mathcal B}$ (e.g., that CSP$({\mathcal B})$ is solvable in polynomial time).
However, for several prominent problems in qualitative reasoning, including {\pot}, it is possible to argue that the $2^{\Ordo(n^2)}$ bound is too naive to be used as a baseline for improvement.
Instead, these problems can be solved in $2^{\Ordo(n \log n)}$  time by enumerating {\em ordered partitions}~\cite{lagerkvist2017d,jonsson2021}, pushing down the running time to $\Ordo^*((2n)^{2n})$ for {\allen}, $\Ordo^*((0.531n)^n)$ for RCC-8 and $\Ordo^*((0.368n)^n)$ for RCC-5 and {\pot}\footnote{The notation $\Ordo^*(\cdot)$ suppresses polynomial factors.}.

Thus, these problems can be solved by enumerating objects similar to assignments in finite-domain CSPs, and the question is then whether it is possible to solve the problem faster than exhaustively enumerating all orderings, similar to how it is a major open question whether CNF-SAT is solvable in $\Ordo^*(c^n)$ time for {\em some} $c < 2$.
This is indeed known to be possible for certain reasoning problems, e.g., {\allen}, which recently has been solved in $\Ordo^*((1.0615n)^{n})$ time~\cite{eriksson2021}, and if the problem is restricted to intervals of length one then it can even be solved in  $2^{\Ordo(n \log \log n)}$ time~\cite{DBLP:conf/kr/DabrowskiJOO20}, and if no point occurs inside more than $k$ intervals then it can be  solved in $\Ordo^*(k^n)$ time~\cite{ijcai2022p251}.
A faster $f(k)^n$ time algorithm, for some function $f$, is also known for the special case of {\pot} where a solution with \emph{effective width} of at most $k$ is asked for~\cite{ijcai2022p251}. However, despite these improvements, we are still far away from an unconditional single-exponential $O^*(c^n)$ time algorithm and even further away from the best-known lower bounds which only rule out subexponential algorithms running in $2^{o(n)}$ time under the {\em exponential-time hypothesis}~\cite{jonsson2021}.
Hence, cutting-edge research suggests that qualitative reasoning problems in many cases admit {\em significantly} improved algorithms even though general single-exponential running times seem to be out of reach with existing methods. 

\subsection*{Our contribution}
In this paper we advance this frontier by describing a novel and significantly improved algorithm for the {\pot} problem with a running time of $\Ordo^*((0.26n)^{n})$, which is
{\em much} faster than the previously known baseline of $O^*{((0.368n)^n)}$.
Hence, our algorithm is not only a showcase that an improved algorithm is possible for {\pot} but {\em significantly} beats the naive upper bound based on enumerating ordered partitions.
We achieve this as follows: after 
introducing the necessary technical background (in Section~\ref{sec:prelim}) we start our work on {\pot} in Section~\ref{sec:partial}.
We analyze the structures of potential solutions and use greedy choices to find a structure that is suitable for enumeration and which yields a significant improvement over enumerating ordered partitions. The basic idea is to group variables into pairs and then order these pairs relative to each other instead of all variables individually, and
ordering the variables in each pair can thereafter be done greedily.
Hence, the basic idea is not that complex, but actually proving soundness and completeness of our approach is non-trivial and requires novel techniques.
Finally, we conclude our results in Section~\ref{sec:conc} and present a discussion over how these algorithms and the ideas behind them might be open to further improvements and what other problems these ideas may be applicable to. Notably, can the algorithm be adapted to solve RCC-5 or RCC-8, and how far can we push the upper bound with this algorithmic technique?

\section{Preliminaries}\label{sec:prelim}

Given a set of finitary relations $\Gamma$ defined on a (potentially infinite) set $D$ of values, we define the constraint satisfaction problem over $\Gamma$ (CSP$(\Gamma)$) as follows.
\problemDef{CSP$(\Gamma)$}
{
A tuple $(V, C)$, where $V$ is a set of variables and $C$ a set of constraints of the form $R(v_1, \ldots, v_t)$, where $t$ is the arity of $R \in \Gamma$ and $v_1, \ldots, v_t \in V$.
}
{
Is there a function $f \colon V \rightarrow D$ such that $(f(v_1), \ldots, f(v_t)) \in R$ for every $R(v_1, \ldots, v_t) \in C$?
}

The set $\Gamma$ is referred to as a \emph{constraint language} and the function $f$ is sometimes called a {\em satisfying assignment} of an instance, $I$, or simply a \emph{model} of $I$.
We write $||I||$ for the number of bits required to represent an instance $I$ of CSP$(\Gamma)$. 

\begin{definition}\label{partialorder}
We define the following orders:
\begin{enumerate}
    \item 
A pair $(S, \leq)$ is a \emph{partial order} if $\leq$ is reflexive ($\forall x \in S$ then $x \leq x$), antisymmetric ($\forall x, y \in S$, if $x \leq y$ and $y \leq x$ then $x = y$), and transitive (if $x\leq y$ and $y\leq z$ then $x\leq z$).
\item
A pair $(S, <)$ is a \emph{strict partial order} if $<$ is irreflexive, asymmetric ($\forall x, y \in S$, if $x < y$ then $y < x$ does not hold), and transitive.
\item 
A pair $(S, \leq)$ is a \emph{total order} if $\leq$ is reflexive, antisymmetric, transitive and strongly connected ($\forall x, y \in S$ then $x\leq y$ or $y\leq x$).
\end{enumerate}
\end{definition}

If $\odot \in \{<,>,||,=\}$ and $P = (S, \leq_P)$ is a partial (or total) order then we write $\odot_P$ for the relation induced by $P$: $x <_P y$ if $x \leq_P y$ and $y \leq_P x$ does not hold, conversely for $>_P$, $||_P$ if neither $x \leq_P y$ nor $y \leq_P x$, and $x =_P y$ if $x \leq_P y$ and $y \leq_P x$. We now define the main problem of the paper.

\problemDef{Partially Ordered Time}
{A set of variables $V$ and a set of binary constraints $C$ where ${c \subseteq \{<,>,||,=\}}$ for each $c(x,y) \in C$.}
{Is there a partial order $P = (S, \leq)$ with ${|S|\leq |V|}$ and a function $f \colon V\rightarrow S$ such that for every constraint $c(x,y)\in C$, $f(x) \odot_P f(y)$ for some $\odot \in c$?}

Alternatively one can also use the {\em random partial order} $P$, i.e., the (unique) countable partial order which is {\em universal} (contains an isomorphic copy of every finite partial order) and is {\em homogeneous} (any isomorphism between finite substructures can be extended to an automorphism of $P$).
Then, the {\pot}  problem can equivalently well be defined as CSP$(\mathcal{RP})$ where $\mathcal{RP}$ is the closure of $\{<_P,>_P,||_P,=_P\}$ under union (cf.~\cite{Bodirsky:InfDom}).

\section{Partially Ordered Time}\label{sec:partial}

 Our approach to beat the naive $O^*{((0.368n)^n)}$ algorithm for {\pot} involves exploring a carefully selected group of partial orders. The algorithm, in particular, organizes variables into pairs where we only have to consider a relative ordering with $\frac{n}{2}$ other variables. This scheme leads to a runtime that is dominated by $n!/2^{\frac{n}{2}}$. Demonstrating the correctness of this strategy is a nontrivial task, and the analysis itself is arguably as interesting as the precise bound we attain.

\subsection{Definitions}
We start by introducing the concepts necessary for the soundness and completeness proofs of the main algorithm.

\begin{definition}
If $P=(S,\leq_P)$ and $P'=(S',\leq_{P'})$ $P$ are two partial orders then $P$ is a \emph{\stub{}} of $P'$ if  $S \subseteq S'$ and $\leq_P\,\subseteq\,\leq_{P'}$.
\end{definition}

We chose to represent our instances as (multi-)relational networks rather than as sets of constraints.
This will give us more flexibility when adding additional restrictions to our instances, since it allows us to limit ourselves to sets of partial orders under some restrictions.

\begin{definition}
For an arbitrary {\pot} instance $I=(V,C)$ we define two different variants of \emph{relational networks}:
\begin{enumerate}
\item
A function $f \colon V^2 \xrightarrow{} \mathbf{\{<,>,||,=\}}$ is a \emph{relational network} (over $V$), if for every constraint  $c(x,y)\in C$ then $f(x,y) \in c(x,y)$.
We also say that $f$ is a relational network for $I$. 

\item A function $f \colon V^2 \xrightarrow{} {\mathcal P}(\{<,>,||,=\})$ is a \emph{multi relational network} if for every constraint $c(x,y)\in C$ then $f(x,y)\subseteq c(x,y)$.
We also say that $f$ is a (multi) relational network for $I$.  
\end{enumerate}

If $I$ is a 'yes'-instance, i.e. there exists a partial order $(S,\leq_P)$ such that for all $c(x,y)\in C$ then $x\odot_Py$ with ${\odot \in c(x,y)}$, we say that $f$ is a \emph{consistent} (multi) relational network.
\end{definition}

We will occasionally view (multi) relational networks as sets of constraints in the obvious way.
For example, if $f$ is a multi relational network of an instance ${I=(V,C)}$, then $f\cup \{x\{<,>\}y\}$ is equivalent to the multi relational network for the instance ${(V,C\cup{\{x\{<,>\}y\}})}$.

\begin{definition} Consider the set of all multi relational networks over a fixed set $V$. 
\begin{enumerate}
    \item 
For two distinct multi relational networks $f$ and $f'$ over $V$ we write $f \preceq f'$ if  $f(x,y)\subseteq f'(x,y)$ for all $x,y\in V$.
\item
We write $f \prec f'$ for the corresponding irreflexive order.
    \item
    A multi relational network $f$ is said to be {\em maximally general} if it is a maximal element in $\preceq$, i.e., there does not exist $f'$ such that $f \prec f'$.
\end{enumerate}
\end{definition}

A solution for a {\pot} instance $I$ with relational network $f$ can now be represented by a relational network $g\preceq f$. 
This will be convenient since it eliminates the need to refer to fixed values in the context of a solution. 

We also need a local consistency definition for our multi relational networks.

\begin{definition}
A multi relational network $f$ over $V$ is (locally) consistent over $s \subseteq V$ if $f(x,y)=\bigcup g_i(x,y)$, $x,y\in s$ where $\{g_1,\dots,g_n\}$ is the set of all consistent relational networks $g_i\preceq f$ over $s$.
\end{definition}

Next, we define the central concept of composing a partial order with a multi relational network, roughly meaning that the partial order is used to simplify the multi relational network as much as possible.

\begin{definition} \label{def:FunderP}
 Given a {\pot} instance $I=(V,C)$ with multi relational network $f$ and a partial order $P=(V,\leq_P)$, we define the {\pot} instance 
 \begin{equation*}
    P\circ f =
    \begin{cases}
        f(x,y)\setminus\{>\}, &\text{ if } x<_Py, \\
        f(x,y)\setminus\{<\}, &\text{ if } y<_Px, \\
        f(x,y)\cap\{=\}, &\text{ if } y=_Px, \\
        f(x,y), &\text{ if } y||_Px.
    \end{cases}
\end{equation*}
\end{definition}

Note that $P\circ f$ may have cases where $f(x,y)=\emptyset$, meaning that there is no partial order $P'=(V,\leq_{P'})$ satisfying $I$ for which $P$ is a \stub{}.

\begin{lemma}\label{lem:PfyesIyes}
Let $I=(V,C)$ be a {\pot} instance with multi relational network $f$. If there exists a partial order ${P=(V,\leq_P)}$ such that $P\circ f$ is a 'yes'-instance then $I$ is a 'yes'-instance.
\end{lemma}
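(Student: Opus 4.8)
The plan is to observe that composing with $P$ can only \emph{restrict} the relational network, never relax it, so any solution of the more constrained instance $P\circ f$ is automatically a solution of $I$. Concretely, my first step is to establish the pointwise inclusion $(P\circ f)(x,y)\subseteq f(x,y)$ for every pair $x,y\in V$. This follows by an immediate case analysis on the relationship between $x$ and $y$ in $P$: in each of the four cases of Definition~\ref{def:FunderP} the value $(P\circ f)(x,y)$ is obtained from $f(x,y)$ either by deleting an element ($f(x,y)\setminus\{>\}$ or $f(x,y)\setminus\{<\}$), by intersecting with $\{=\}$, or by leaving it unchanged, and each of these operations produces a subset of $f(x,y)$.

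Next I would chain this inclusion with the defining property of a multi relational network. Since $f$ is a multi relational network for $I=(V,C)$, we have $f(x,y)\subseteq c(x,y)$ whenever $c(x,y)\in C$. Combining this with the previous step yields $(P\circ f)(x,y)\subseteq f(x,y)\subseteq c(x,y)$ for every constraint of $C$.

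Finally, I would transfer a witnessing solution. Viewing $P\circ f$ as a set of constraints in the usual way, assume it is a 'yes'-instance, witnessed by a partial order $Q=(S,\leq_Q)$ and a map $g\colon V\to S$ with $g(x)\odot_Q g(y)$ for some $\odot\in(P\circ f)(x,y)$ on every relevant pair. By the inclusion of the previous paragraph that same $\odot$ lies in $c(x,y)$, so the pair $(Q,g)$ already satisfies every constraint $c(x,y)\in C$; hence $I$ is a 'yes'-instance.

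The argument is essentially a monotonicity observation, so there is no substantive obstacle; the only point requiring care is the bookkeeping between the three layers---the original constraint set $C$, the multi relational network $f$, and its composition $P\circ f$---and in particular verifying that the $=_P$ case $f(x,y)\cap\{=\}$ genuinely remains a subset of $f(x,y)$ rather than introducing a relation not already present. Note also that if some $(P\circ f)(x,y)=\emptyset$ then $P\circ f$ cannot be a 'yes'-instance at that pair, so the hypothesis already rules this out. Once the inclusions are lined up, completeness of the transfer is immediate, since solving a more restrictive network can only be harder, never easier.
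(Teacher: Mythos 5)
Your proposal is correct and follows essentially the same route as the paper's proof: establish the pointwise inclusion $(P\circ f)\preceq f$ from Definition~\ref{def:FunderP}, chain it with $f(x,y)\subseteq c(x,y)$ to get $(P\circ f)(x,y)\subseteq c(x,y)$, and then transfer the witnessing assignment directly to $I$. Your version merely spells out the case analysis and the witness transfer in more detail than the paper does.
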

\begin{proof}
By Definition~\ref{def:FunderP}, $(P\circ f)\preceq f$ and hence we have that ${(P\circ f)(x,y) \subseteq c(x,y)}$ for all $c(x,y)\in C$.
So, if there is an assignment satisfying $P\circ f$, the same assignment must also satisfy $I$.
\end{proof}

We are now ready to present the limited set of partial orders that we are interested in enumerating.

\begin{definition}
A \emph{total ordering of pairs} ({\toop}) is a partial order $P=(V,\leq_P)$ such that for every $x\in V$ there is at most one $y\in V\setminus \{x\}$ such that $x||_P y$.
We say that $(x,y)$ is a \emph{pair} in $P$ if $x||_P y$. 
\end{definition}

We will use the notation $(a_1,b_1)<_P\ldots<_P(a_i,b_i)$ for writing (part of) a {\toop} $P$.
Here $(a_j,b_j)$ are our pairs for $j\in\{1,\ldots,i\}$.
Note that we do not make a difference between $(a_j,b_j)$ and $(b_j,a_j)$.

Now we are ready to start applying {\toop}s on multi relational networks, see what important structures can occur and, most importantly, investigate why this is beneficial and determine for which cases the output is still difficult to solve.

\begin{definition}
Given a {\toop} $P$ and a multi relational network $f$ of a {\pot} instance, a \emph{link} in $P\circ f$ is a non-empty sequence of pairs $(a_1,b_1)<_P\ldots<_P(a_n,b_n)$ such that
\begin{enumerate}
    \item $(P\circ f)(a_i,b_i)\setminus\{=\}=\{<,>\}$,
    \item $(P\circ f)(a_i,a_j)\setminus\{=\}=\{<\}$, $1\leq i<j\leq n$,
    \item $(P\circ f)(b_i,b_j)\setminus\{=\}=\{<\}$, $1\leq i<j\leq n$, and
    \item $|| \in (P\circ f)(a_i,b_j)$, $i\neq j$.
\end{enumerate}
Two links \emph{overlap} if they share any pairs.
\end{definition}
Note that a single link does not need to contain more than a single pair.
Also, two links sharing the same pairs, e.g. $(a,b)$ and $(b,a)$, are technically the same link.
This is just two different representations, or \emph{directions}, of the same link, which will be relevant for the following important concept. 

\begin{definition}
Given a {\toop} $P$ and multi relational network $f$ for some {\pot} instance, a \emph{chain} in $P \circ f$ is two variables $x<_Py$ with ${(P\circ f)(x,y)\setminus\{=\}=\{||\}}$ and a non-empty sequence of $m$ links
\begin{align*}
(a_{1,1},b_{1,1})&<_P\ldots<_P(a_{1,n_1},b_{1,n_1})<_P\ldots<_P\\
(a_{m,1},b_{m,1})&<_P\ldots<_P(a_{m,n_m},b_{m,n_m})
\end{align*}
such that
\begin{enumerate}
    \item either $x=b_{1,1}$ or $(P\circ f)(x,b_{1,1}) \subseteq \{<,=\}$,
    \item either $y=a_{m,n'}$ or $(P\circ f)(a_{m,n'},y) \subseteq \{<,=\}$,
    \item $P\circ f(a_{i,n_i},b_{i+1,1})\setminus\{=\}=\{<\}$,
    \item $(P\circ f)(a_{i,i'},a_{j,j'})\setminus\{=\}=\{<,||\}$, $i < j$, ${i'\in\{1,\dots,n_i\}}$, ${j'\in\{1,\dots,n_j\}}$ and
    \item $(P\circ f)(b_{i,i'},b_{j,j'})\setminus\{=\}=\{<,||\}$, $i< j$, ${i'\in\{1,\dots,n_i\}}$, ${j'\in\{1,\dots,n_j\}}$.
\end{enumerate}
We say that a chain is {\em broken} if ${(P\circ f)(a_{i,i'},b_{i,i'})=\{<\}}$ for all $i'$ for some $i$.
The \emph{length} of a chain is the number of links $m$ in the chain.
We say that $x$ is the \emph{head} of the chain and $y$ the \emph{tail}.
\end{definition}

While we here differ between $(a,b)$ and $(b,a)$ in contrast to how we did for {\toop}s, it should be clear by the context when it is of importance or not, i.e. when we are speaking of chains and when we are only discussing (parts of) a {\toop}.

\begin{example}\label{ex:ChainsAndLinks}

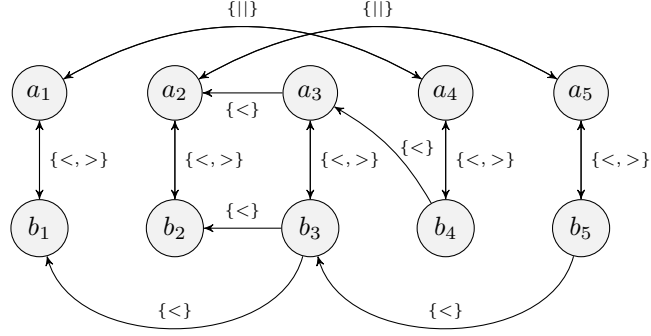
\begin{figure}
  \begin{tikzpicture}[scale=0.9]

    \node[circle,draw, fill = gray!10] (a1) at (0,0) {$a_1$};
    \node[circle,draw, fill = gray!10] (a2) at (2,0) {$a_2$};
    \node[circle,draw, fill = gray!10] (a3) at (4,0) {$a_3$};
    \node[circle,draw, fill = gray!10] (a4) at (6,0) {$a_4$};
    \node[circle,draw, fill = gray!10] (a5) at (8,0) {$a_5$};
    
    \node[circle,draw, fill = gray!10] (b1) at (0,-2) {$b_1$};
    \node[circle,draw, fill = gray!10] (b2) at (2,-2) {$b_2$};
    \node[circle,draw, fill = gray!10] (b3) at (4,-2) {$b_3$};
    \node[circle,draw, fill = gray!10] (b4) at (6,-2) {$b_4$};
    \node[circle,draw, fill = gray!10] (b5) at (8,-2) {$b_5$};
    
    \draw[->] (a1) -- (b1) node[midway, right] {\tiny $\{<,>\}$};
    \draw[->] (a2) -- (b2) node[midway, right] {\tiny $\{<,>\}$};
    \draw[->] (a3) -- (b3) node[midway, right] {\tiny $\{<,>\}$};
    \draw[->] (a4) -- (b4) node[midway, right] {\tiny $\{<,>\}$};
    \draw[->] (a5) -- (b5) node[midway, right] {\tiny $\{<,>\}$};

    \draw[->] (a3) -- (a2) node[midway, below] {\tiny $\{<\}$};
     \draw[->] (b3) -- (b2) node[midway, above] {\tiny $\{<\}$};

    \draw[->] (b4)  edge[bend right = 15] node [right] {\tiny $\{<\}$} (a3);

    \path[->] (a1) edge[bend left] node [above] {\tiny $\{||\}$} (a4);
    \path[->] (a4) edge[bend right] node [above] {} (a1);

    \path[->] (a2) edge[bend left] node [above] {\tiny $\{||\}$} (a5);
    \path[->] (a5) edge[bend right] node [above] {} (a2);

    \path[->] (b3) edge[bend left = 75] node [above] {\tiny $\{<\}$} (b1);
    \path[->] (b5) edge[bend left = 75] node [above] {\tiny $\{<\}$} (b3);

    \draw[->] (b1) -- (a1); 

    \draw[->] (b2) -- (a2);
    \draw[->] (b3) -- (a3);
    \draw[->] (b4) -- (a4);
    \draw[->] (b5) -- (a5);
    
    \end{tikzpicture}
  \caption{Graphic representation of two chains overlapping in one link ($a_3,b_3$), as described in Example~\ref{ex:ChainsAndLinks}.
  Variables/pairs to the left are ordered before those to the right in our {\toop} and hence we assume either a $<$- or a $||$-relation going from left to right.}
  \label{fig:example1}
\end{figure}

Take the {\toop} \\
${(a_1,b_1)<_P(a_2,b_2)<_P(a_3,b_3)<_P(a_4,b_4)<_P(a_5,b_5)}.$
Let ${f(b_1,b_3)=\{<\}}$,
${f(a_2,a_3)=\{<\}}$, ${f(b_2,b_3)=\{<\}}$,
${f(a_3,b_4)=\{<\}}$,
${f(b_3,b_5)=\{<\}}$, ${f(a_1,a_4)=\{||\}}$, ${f(a_2,a_5)=\{||\}}$ and for every ${i\in\{1,\ldots,5\}}$,  ${f(a_i,b_i)=\{<,>\}}$.
We now have five links: $(a_1,b_1)$, ${(a_2,b_2)<_P(a_3,b_3)}$, $(b_3,a_3)$, $(b_4,a_4)$ and $(b_5,a_5)$.
We also have two chains: ${(a_1,b_1)<_P(b_3,a_3)<_P(b_4,a_4)}$ and ${(a_2,b_2)<_P(a_3,b_3)<_P(b_5,a_5)}$.
Here ${(a_2,b_2)<_P(a_3,b_3)}$ and $(b_3,a_3)$ contain the same pair in $a_3,b_3$, but their directions differ, and hence these links overlap in the opposite directions.

This example is also visualized in Figure~\ref{fig:example1}.
\end{example}

Before we prove that $T\circ f$ is solvable in polynomial time for any total order $T$ we introduce the following comparability property of partial orders. 

\begin{definition}\label{def:||larger}
Given two relational networks $f$ and $g$ over the same variable set $V$, $f$ is \emph{$||$-larger} than $g$ if, for any variables $x,y \in V$:
\begin{enumerate}
    \item if $g(x,y)=\{||\}$ then $f(x,y)=\{||\}$,
    \item if $g(x,y)=\,\{<\}$ then $f(x,y)\in\{<,||\}$,
    \item if $g(x,y)=\,\{>\}$ then $f(x,y)\in\{>,||\}$,
    \item if $g(x,y)=\{=\}$ then $f(x,y)\in\{<,>,||,=\}$, 
\end{enumerate}
and there exists $x,y \in V$ such that $g(x,y)\neq f(x,y)$.

Furthermore, we say that $f$ is \emph{$||$-maximal} if there does not exist any $g$ which is $||$-larger than $f$.
\end{definition}

\subsection{The algorithm}

We begin by showing how to solve ${T\circ f}$ in polynomial time for a total order $T$.

\begin{lemma}\label{lem:todotfpoly}
Let $I = (V,C)$ be a {\pot} instance with multi relational network $f$ and let $T = (V, <_T)$ be a total order.
Then ${T\circ f}$ is solvable in polynomial time.
\end{lemma}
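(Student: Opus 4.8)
The plan is to reduce $T\circ f$ to the problem of resolving each constrained pair to a single basic relation, to solve this by a greedy path-consistency procedure, and then to argue that the greedy choices never discard a solution. The first step is to record the simplification coming from $T$ being a \emph{total} order: for distinct $x,y$ exactly one of $x<_T y$, $y<_T x$ holds, so by Definition~\ref{def:FunderP} the value $(T\circ f)(x,y)$ is obtained from $f(x,y)$ by deleting the relation pointing against the $T$-direction. Consequently $(T\circ f)(x,y)\subseteq\{<,||,=\}$ whenever $x<_T y$, and every surviving $<$-relation is consistent with the acyclic order $T$. Solving $T\circ f$ therefore amounts to choosing, for each pair, one of the remaining relations so that the resulting relational network is a partial order; by Lemma~\ref{lem:PfyesIyes} any such choice already yields a model of the original instance.

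Next I would run the greedy procedure: enforce path consistency (closing each $f(x,y)$ under $f(x,y)\subseteq f(x,z)\circ f(z,y)$ over all $z$), then sweep over all pairs deleting $=$ whenever another relation survives, re-propagate, and finally set $f(x,y)=\{||\}$ for every pair still containing $||$, reporting a 'yes'-instance iff no pair becomes $\emptyset$. Each propagation round costs $O(n^3)$ and the network can shrink only $O(n^2)$ times before reaching a fixpoint, while the two greedy sweeps cost $O(n^2)$; hence the whole procedure is polynomial in $||I||$.

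For \emph{soundness} I would show the surviving network is an actual partial order. After the sweeps, equality remains only on pairs forced to $\{=\}$; these define equivalence classes whose sizes sum to at most $|V|$, and collapsing them gives a ground set $S$ with $|S|\le|V|$. The induced $<$-relation is irreflexive and agrees with $T$, hence acyclic, so transitivity is the only axiom to verify, and it follows from path consistency: if $x<_P y$ and $y<_P z$ both survive as singletons $\{<\}$, then $\{<\}\circ\{<\}=\{<\}$ forces $f(x,z)=\{<\}$, so the pair $(x,z)$ was never relaxed to $||$ in the final sweep. Thus the $<$-relations form a strict partial order, the induced incomparabilities are automatically consistent, and we obtain a genuine model.

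The hard part is \emph{completeness}: showing that the greedy preference for $||$ (and against $=$) never destroys a solution, which is precisely the content of the $||$-maximality notion of Definition~\ref{def:||larger}. Starting from any model $g\preceq T\circ f$, I would argue by an exchange argument that $g$ can be transformed into the $||$-maximal network returned by the algorithm: whenever a pair is realised as $<$ or $=$ in $g$ but $||$ survives path consistency, switching that pair to $||$ cannot create a transitivity violation, since such a violation would require a surviving chain $x<_P\cdots<_P y$, and path consistency would then already have deleted $||$ from $f(x,y)$. Iterating these switches reaches exactly the algorithm's output, so a model exists iff the procedure reports one. I expect this exchange argument -- certifying that incomparability can be maximised \emph{simultaneously} across all eligible pairs while preserving the partial-order axioms -- to be the main technical obstacle, since one must rule out the possibility that two individually safe relaxations interact to force a comparability that is no longer available.
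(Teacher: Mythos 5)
Your algorithm, complexity analysis, and soundness argument coincide with the paper's: the paper runs exactly this path-consistency-plus-greedy-sweeps procedure, and in fact your soundness discussion is more detailed than the paper's (which treats it as immediate). The gap is in completeness, precisely where you yourself anticipate it. Your single-pair exchange step is not valid as stated: you justify switching a pair realised as $<$ in a model $g$ to $||$ by saying a transitivity violation ``would require a surviving chain $x<_P\cdots<_P y$, and path consistency would then already have deleted $||$ from $f(x,y)$.'' This conflates chains present in the particular model $g$ with chains \emph{forced} by the network $f$. Concretely, take three variables with $f(x,z)=f(z,y)=f(x,y)=\{<,||\}$ and the total order $x<_T z<_T y$. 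Path consistency keeps $||$ in $f(x,y)$, because $x\,||\,z$, $z<y$, $x\,||\,y$ is a consistent configuration; yet the model $g$ with $g(x,z)=g(z,y)=g(x,y)=\,<$ cannot have only the pair $(x,y)$ switched to $||$ without destroying transitivity. So individually ``safe'' switches can be blocked by the model itself, and one must relax several pairs at once (here all three) --- exactly the simultaneity problem you flag as the ``main technical obstacle'' and leave unresolved. The $=$-removal sweep has the same issue: a model may realise a pair as $=$ while the network allows more, and you need some \emph{other} model that avoids $=$ there.

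The paper sidesteps the exchange entirely by choosing the witness correctly at the outset: it takes $g$ to be a $||$-maximal consistent relational network (Definition~\ref{def:||larger}) among those that topologically sort to $T$; such a $g$ exists by finiteness, since some consistent network sorts to $T$. It then checks that the invariant $g(x,y)\in f(x,y)$ is preserved through each phase of the algorithm (first propagation, the $=$-removal sweep, second propagation, the final $||$-sweep), and concludes that the final network actually \emph{equals} $g$ because $g$ is $||$-maximal. In other words, instead of massaging an arbitrary model into the algorithm's output, one proves the output is itself a specific, pre-existing model. To repair your proof you would need either this device or a genuinely simultaneous exchange argument; since the latter amounts to proving that a $||$-maximal model exists and is reachable, the paper's route is the economical one, and your argument should be restructured around that maximal witness rather than around pairwise switches.
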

\begin{proof}
Solving ${T\circ f}$ can be done by enforcing consistency for each triple ${x,y,z\in V}$, e.g., if ${f(x,y)=\{<\}}$ and ${f(y,z)=\{<\}}$ then ${f(x,z)=\{<\}}$.
We repeat this until no more changes occur.
If any ${f(x,y)=\emptyset}$ then ${T\circ f}$ must be a 'no'-instance, otherwise it is a 'yes'-instance.
Soundness for this approach follows naturally.
To see that this approach is complete take the $||$-maximal relational network $g$ which can give $T$ when topologically sorted.
Since $I$ given $T$ is a 'yes'-instance, $g$ must exist.
After the first round of local consistency propagation the following must be true for all triples $x<_T y <_T z$ for our $f$:
\begin{enumerate}
    \item if ${=} \in f(x,z)$ then ${=} \in f(x,y)$ and ${=} \in f(y,z)$, or ${||} \in f(x,y)$ and ${||\,\in f(y,z)}$,
    \item if ${<} \in f(x,z)$ then $f(x,y)= f(y,z)$ cannot equal ${=}$,  and
    \item if ${||} \in f(x,z)$ then ${||} \in f(x,y)$ or ${||} = f(y,z)$.
\end{enumerate}
Similarly, $g(x,y)\in f(x,y)$.
By then removing $=$-relations from any $|f(x,y)|>1$ the relations $g(x,y)\in f(x,y)$ must still be true and hence the network is still a 'yes'-instance.
Again, we enforce consistency for triples until no changes occur.
For any triple ${x<_T y <_T z}$, where $g(x,y)$, $g(x,z)$ and $g(y,z)$ are not equal to $\{=\}$, it then still holds that if ${||\in f(x,z)}$ then ${||\in f(x,y)}$ or ${||\in f(y,z)}$.
Hence, for every ${||\in f(x,y)}$ we can assume ${f(x,y)=\{||\}}$ and still have a 'yes'-instance. 
In fact, $f=g$ since $g$ is $||$-maximal.
Hence, if $I$ is a 'yes'-instance and if there is any relational network for $I$ that topologically sorts to $T$, we will answer 'yes'.

For the complexity, we only need to propagate complexity for variable triples, and each propagation cycle must remove at least one relation from $f$, else no change is made.
As there are at most four possible atomic relations between each variable pair, the number of cycles is polynomially bounded by $O(|V|^2)$.
Hence, this approach can be done in polynomial time.
\end{proof}

Why it is actually interesting to solve $T\circ f$ in polynomial time is shown by the following lemma.

\begin{lemma}\label{lem:IyesTyes}
If a {\pot} instance $I$ with multi relational network $f$ is a 'yes'-instance then there is a total order $T$ such that $T\circ f$ is a 'yes'-instance.
\end{lemma}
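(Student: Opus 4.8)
The plan is to take any solution of the instance and read off a total order $T$ on $V$ from it, oriented so that $T$ never contradicts the solution; the very same solution will then witness that $T \circ f$ is a 'yes'-instance. This is the converse direction to Lemma~\ref{lem:PfyesIyes}. Concretely, since $I$ with $f$ is a 'yes'-instance, I would fix a partial order $Q = (S, \leq_Q)$ with $|S| \leq |V|$ together with an assignment $h \colon V \to S$ witnessing it, i.e.\ for every pair $x,y \in V$ the relation $\odot$ determined by $h(x) \odot_Q h(y)$ lies in $f(x,y)$. The first step is to pull this order back to the variables by defining a relation $\lhd$ on $V$ via $x \lhd y$ iff $h(x) <_Q h(y)$. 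As $<_Q$ is irreflexive, asymmetric and transitive, so is $\lhd$, and hence $(V, \lhd)$ is a strict partial order; note that $\lhd$ leaves unrelated exactly the distinct pairs with $h(x) = h(y)$ or $h(x) ||_Q h(y)$, which are precisely the pairs that $T$ is free to orient. I would then extend $\lhd$ to a total order using the fact that every finite strict partial order has a linear extension (topological sorting), letting $T = (V, <_T)$ be one such extension, so that $x \lhd y$ implies $x <_T y$ for all $x,y$. Since $T$ is total, distinct variables are strictly $<_T$-comparable and $x =_T y$ holds only when $x = y$, so for distinct $x,y$ the definition of $T \circ f$ (Definition~\ref{def:FunderP}) only ever invokes its first two cases.

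The crux is to verify that $(Q,h)$ still realizes $T \circ f$. Take distinct $x,y$ with, say, $x <_T y$, so that $(T \circ f)(x,y) = f(x,y) \setminus \{>\}$, and let $\odot$ be the relation with $h(x) \odot_Q h(y)$. We already know $\odot \in f(x,y)$, so it suffices to rule out $\odot = {>}$: if $\odot$ were $>$, then $h(y) <_Q h(x)$, i.e.\ $y \lhd x$, which forces $y <_T x$ by the extension property and contradicts $x <_T y$. Hence $\odot \in f(x,y) \setminus \{>\} = (T \circ f)(x,y)$; the case $y <_T x$ is symmetric, and the diagonal pairs are immediate since $h(x) =_Q h(x)$ and $= \in f(x,x)$. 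Thus every pair is satisfied by $(Q,h)$ and $T \circ f$ is a 'yes'-instance.

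The only genuinely delicate point is this final verification, namely that extending $\lhd$ to $T$ never creates a $<_T$-edge opposite to a strict $<_Q$-relation. This is exactly the property that guarantees that deleting $>$ (resp.\ $<$) when forming $T \circ f$ cannot remove the relation the solution relies on, and it is precisely where the choice of $T$ as a linear extension of the \emph{solution-induced} order (rather than of the network $f$) does the work. The remaining ingredients — existence of the linear extension and preservation of $\odot \in f(x,y)$ — are routine.
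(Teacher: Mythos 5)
Your proposal is correct and follows essentially the same route as the paper's proof: take the partial order induced by a solution, topologically sort (linearly extend) it to a total order $T$, and observe that the extension never reverses a strict relation of the solution, so the same solution satisfies $T \circ f$. The paper phrases the solution as a relational network $g$ rather than an assignment $h$ into a partial order, but the argument is the same; your version just spells out the verification step more explicitly.
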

\begin{proof}
We remind the reader that we can represent a solution for $I$ with a relation network.
Take any relational network $g$ satisfying $I$ and the partial ordering $P$ described by this relational network.
Topologically sort $P$ to a total order $T$.
From the definition of $T\circ f$ we know that ${T\circ f(x,y) = f(x,y)\setminus\{>\}}$ if $x<_Ty$.
As $x\not <_Ty$ if $g(x,y)=\{>\}$, $g$ must be consistent with $T\circ f$ and hence $g$ is also satisfies $T\circ f$, proving the lemma.
\end{proof}

As a sanity check for making sure that an instance does not have small subinstances that are 'no'-instanses, i.e. the instance is localy consistent, we prove the following lemma.
Recall that $||f||$ is the number of bits needed to represent $f$.

\begin{lemma} \label{lemma:local}
For any multi relational network $f$, integer $k>0$ and a set of variables $s\subseteq V$ with $|s|\leq k$, a maximally general multi relational network $f' \preceq f$ which is locally consistent with $s$ can be computed in $h(k)\cdot ||f||^{O(1)}$ time for some computable function $h(k)$.
\end{lemma}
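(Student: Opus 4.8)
The plan is to observe that local consistency over $s$ constrains only the values $f'(x,y)$ for pairs $x,y\in s$, so I would leave $f'$ equal to $f$ on every pair with at least one endpoint outside $s$ and only \emph{prune} the at most $\binom{k}{2}$ pairs inside $s$. For those pairs I would set, following the definition of local consistency directly,
\[
  f'(x,y) \;=\; \bigcup_{g} g(x,y), \qquad x,y\in s,
\]
where $g$ ranges over all consistent relational networks over $s$ with $g\preceq f$ (viewing a single-valued $g$ as a multi relational network of singletons). That $f'\preceq f$ is then immediate: off $s$ we copy $f$, and on $s$ each $g(x,y)\in f(x,y)$, so the union is still contained in $f(x,y)$. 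Since $f$ is a multi relational network for $I$, so is $f'$.

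The crucial point — and the main obstacle — is the self-referential nature of the local-consistency definition: I must show that the pruned network $f'$ is itself locally consistent, i.e.\ that re-running the construction on $f'$ reproduces $f'$. The key observation is that pruning removes no consistent relational network. Indeed, because $f'\preceq f$, every consistent $g\preceq f'$ also satisfies $g\preceq f$; conversely, every consistent $g\preceq f$ has $g(x,y)\subseteq f'(x,y)$ for all $x,y\in s$ by construction, hence $g\preceq f'$. Thus the family of consistent relational networks over $s$ lying below $f'$ coincides with the family below $f$, so their pairwise unions still equal $f'$ on $s$, which is exactly local consistency over $s$.

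For maximality I would argue that $f'$ is in fact the unique greatest locally consistent network below $f$. Given any locally consistent $f''\preceq f$, every relation in $f''(x,y)$ with $x,y\in s$ is, by local consistency of $f''$, witnessed by a consistent $h\preceq f''\preceq f$; such $h$ is one of the networks defining $f'$, so the relation lies in $f'(x,y)$. Off $s$ we have $f''\preceq f=f'$. Hence $f''\preceq f'$, and in particular no locally consistent network below $f$ strictly dominates $f'$, so $f'$ is maximally general.

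Finally, for the running time I would compute $f'$ by brute force over $s$. There are at most $4^{\binom{k}{2}}$ relational networks $g\preceq f$ over $s$ (choosing one basic relation from $f(x,y)$ for each of the $\binom{k}{2}$ unordered pairs, the converse being forced), and checking whether such a $g$ is consistent — i.e.\ whether its labelling is realisable by a partial order on $s$, which amounts to verifying that $=$ induces an equivalence relation and that $<$ induces a strict partial order compatible with $||$ on the quotient — takes time polynomial in $k$. Accumulating $g(x,y)$ over the consistent $g$ yields $f'$ on $s$, while copying $f$ on the remaining pairs costs $||f||^{O(1)}$. The total is therefore $h(k)\cdot||f||^{O(1)}$ with $h(k)=4^{\binom{k}{2}}\cdot\mathrm{poly}(k)$, as required.
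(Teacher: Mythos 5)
Your proposal is correct and follows essentially the same route as the paper's proof: enumerate all consistent relational networks over $s$ that lie below $f$, take their pairwise unions on $s$, copy $f$ elsewhere, and bound the enumeration by a function of $k$. You are in fact somewhat more careful than the paper — you explicitly verify that the pruned network $f'$ is itself locally consistent and that it dominates every locally consistent network below $f$, and your count of $4^{\binom{k}{2}}$ candidate networks with a direct realizability check is tighter than the paper's rough $16^{|s|^2}$ bound — but these are refinements of the same argument, not a different one.
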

\begin{proof}
Let  ${\{g_1,\dots,g_m\}}$ be the set of all consistent relational networks $g\preceq f$ over $s$.
For each $s$, there are roughly $16^{|s|^2}$ potential $g_i$s as we have four different relations that can either be allowed, or not allowed, between each pair of variables in $s$.
Each potential $g_i$ can be compared to $f$ to check for consistency in $||f||^{O(1)}$ time as $||g_i|| \leq ||f||$.
Now, for each pair ${x,y\in s}$ let ${f'(x,y) = \bigcup g_i(x,y)}$ and for all pairs where either ${x\not \in s}$ or ${y\not \in s}$ let ${f'(x,y)=f(x,y)}$.
Since ${|s|\leq k}$ the number of possible $16^{|s|^2}$ and the complexity of constructing $f'$ is bounded by $h(k)$ for some computable function ${h \colon \mathbb{N} \rightarrow \mathbb{N}}$.
Clearly, $f'\preceq f$, and since we enumerated all possible relations between all pairs in $s$, $f'$ must also be the largest such multi relational network.
\end{proof}

Now, we are ready to introduce the first of the two main gadgets necessary for our later proofs. For a partial order $P = (V, \leq)$ and distinct $x,y \in V$ we write $P^{x < y}$ for ${(V, \leq \cup \{(x,y)\})}$. 
While this operation may technically produce structures that are not partial orders (e.g., ${x <_P y <_P x}$), we in the forthcoming definitions will only use it on pairs in {\toop}s where the result is always guaranteed to be a partial order.

\begin{definition}\label{def:R_partial}
For a {\toop} $P = (V,\leq)$ and a multi relational network $f$ over $V$ we describe a {\em partial} function $\rpartial$ defined according to the following rules.

\begin{enumerate}
\item
$\rpartial(P,f) = (P,f)$ if $P$ is a total order,
\item
$\rpartial(P,f) = R(P^{x < y},f)$ if there exists distinct ${x,y\in V}$ with $x||_Py$ and $< \not\in (P\circ f)(x,y)$,
\item
$\rpartial(P,f) = R(P^{a_i < b_i})$ if there exists a link ${(a_1,b_1)<\dots<(a_k,b_k)}$ such that there is no other link ${(c_1,d_1)<\dots<(c_{k'},d_{k'})}$ for which ${a_i=d_{i'}}$ and ${b_i=c_{i'}}$ for any $i$ and $i'$, 
\item
$\rpartial(P,f) = \rpartial(P, f')$  
 if there exists $a,b,c,d \in V$ and a maximally general multi relational network $f' \neq f$, $f' \preceq f$ and locally consistent with $\{a,b,c,d\}$.
\end{enumerate}
\end{definition}

It is easy to see that if $\rpartial(P,f) = (T,f)$ is defined then it returns a total order $T$ such that $P$ is a \stub{} of $T$. We then consider the following extension of $\rpartial$ which is guaranteed to be totally defined due to the second rule.

\begin{definition}\label{def:R_total}
For a {\toop} $P = (V, \leq)$ and a multi relational network $f$ over $V$ we describe a {\em total} function $\rtotal$ defined according to the following rules.

\begin{enumerate}
\item
$\rtotal(P,f) = \rpartial(P,f)$ if $\rpartial(P,f)$ is defined, and
\item
$\rtotal(P,f) = \rtotal(P^{x < y}, f)$ if there exists $x,y\in V$ where $\{<,>\}\subseteq (P\circ f)(x,y)$.
\end{enumerate}
\end{definition}

Note that $\rtotal$ will always return a total order, since otherwise the second rule could be applied.
However, the function is not sound and can given a 'yes'-instance $P\circ f$ return a 'no'-instance $T\circ f$.
Even so, this function will be the one we later use for solving {\pot} instances.

As a second extension of $\rpartial$, we introduce the total function $\rcorrect$.
This function will computationally be more expensive than $\rtotal$, but with the trade-off that it in the second step makes sure extending the given {\toop} with $x < y$ is actually a reasonable choice that does not get the function stuck with a 'no'-instance.
Thus, if extending the partial order by $x < y$ would yield a 'no'-instance, $\rcorrect$ (in contrast to $\rtotal$) simply stops and returns $(P,f)$.

\begin{definition}\label{def:R_correct}
For a {\toop} $P = (V, \leq)$ and a multi relational network $f$ over $V$ we describe a total function $\rcorrect$ defined according to the following rules.

\begin{enumerate}
\item
$\rcorrect(P,f) = \rpartial(P,f)$ if $\rpartial(P,f)$ is defined.
\item
$\rcorrect(P,f) = \rcorrect(P^{x < y}, f)$ if there exists ${x,y\in V}$ with $\{<,>\}\subseteq (P\circ f)(x,y)$ and where  ${P\circ (f\cup \{x\{<,||,=\}y\})}$ is a 'yes'-instances, and
\item  $\rcorrect((P,f)) = (P,f)$ otherwise.
\end{enumerate}
\end{definition}

Note that $\rcorrect$ differs from $\rtotal$ in the sense that if $\rcorrect$ (via the second rule) is given a 'yes'-instance it also guarantees that the output is a 'yes'-instance.
The precise relation between $\rtotal$ and $\rcorrect$ will be of much interest to us.
In fact, we will show that if $f$ is a 'yes'-instance, then there must exist a {\toop} such that $\rtotal=\rcorrect$. We begin with the following lemma.

\begin{lemma}\label{lem:Rpoly}
For an arbitrary {\pot} instance with multi relational network $f$ and some arbitrary {\toop} ${P=(V,\leq_P)}$, ${\rtotal(P,f)}$ can be computed in ${poly(||I||)}$ time and space.
\end{lemma}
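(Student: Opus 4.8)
The plan is to bound the computation by exhibiting two monotone potentials that together force the recursion to terminate after polynomially many steps, and then to show that each individual step is polynomial.

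First I would set up the termination measure. Associate to the current state $(P,f)$ two quantities: $\mu(P)$, the number of (incomparable) pairs of $P$---which for a {\toop} is at most $\lfloor |V|/2 \rfloor$ since the incomparability relation is a partial matching---and $\nu(f) = \sum_{x,y \in V} |f(x,y)|$, the total number of atomic relations permitted by $f$, which is at most $4|V|^2$. I would then classify each recursive call in $\rpartial$ (Definition~\ref{def:R_partial}) and $\rtotal$ (Definition~\ref{def:R_total}): rules (2) and (3) of $\rpartial$ and rule (2) of $\rtotal$ each turn a previously incomparable pair into a comparable one, so they strictly decrease $\mu$ and never increase it (adding order relations and closing transitively can only remove incomparabilities); rule (4) of $\rpartial$ replaces $f$ by a strictly smaller $f' \prec f$, strictly decreasing $\nu$ while leaving $P$---and hence $\mu$---untouched. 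Since $\mu$ is non-increasing throughout and strictly decreases at every ordering step, there are at most $\lfloor |V|/2 \rfloor$ ordering steps; between any two of them $\nu$ strictly decreases at every rule-(4) step, so there are at most $4|V|^2$ such steps in total. Consequently the evaluation of $\rtotal(P,f)$ unfolds as a chain of $O(|V|^2)$ recursive calls, terminating at the base case (rule (1) of $\rpartial$, when $P$ is a total order).

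Next I would verify that deciding which rule applies, and carrying it out, costs only $\mathrm{poly}(||I||)$ per step. Computing $P \circ f$ is immediate from Definition~\ref{def:FunderP} by inspecting each of the $O(|V|^2)$ pairs. Rule (1) is the test that $P$ has no incomparable pair; rule (2) of $\rpartial$ and rule (2) of $\rtotal$ are searches over pairs for the stated conditions on $(P\circ f)(x,y)$---all polynomial. Rule (3) requires locating a link and checking that it has no opposite-direction overlapping link; since the defining conditions of a link and of overlap are conjunctions of constraints on the values $(P\circ f)(\cdot,\cdot)$ over polynomially many pairs, this reduces to a reachability-style computation on the $O(|V|^2)$ pairwise relations and is polynomial. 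For rule (4), I would invoke Lemma~\ref{lemma:local} with $k=4$: for each of the $O(|V|^4)$ four-element subsets $\{a,b,c,d\}$ we compute the maximally general locally consistent $f' \preceq f$ in $h(4)\cdot ||f||^{O(1)}$ time, which is polynomial because $h(4)$ is a constant, and test whether $f' \neq f$.

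Finally I would multiply: $O(|V|^2)$ recursive calls, each costing $\mathrm{poly}(||I||)$, yields total running time $\mathrm{poly}(||I||)$; the space bound is immediate since at every point we store only the current $P$ and $f$ together with the polynomial-size scratch data of a single step. The step I expect to be the main obstacle is the polynomial-time detection of rule (3): one must argue carefully that the existence of a link with no opposite overlapping link can be decided without enumerating exponentially many candidate link sequences, by reducing it to local checks on $P\circ f$. Establishing the termination measure is conceptually the crux but technically routine once $\mu$ and $\nu$ are identified.
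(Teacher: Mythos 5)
Your proposal is correct and takes essentially the same approach as the paper's proof: check that each rule of $\rpartial$ and $\rtotal$ can be detected and executed in polynomial time, invoking Lemma~\ref{lemma:local} with constant $k=4$ for rule (4) and a polynomial search over the at most $|V|/2$ pairs for rule (3). Your explicit potential-function argument ($\mu$ and $\nu$) bounding the recursion by $O(|V|^2)$ calls is actually more rigorous than the paper's proof, which verifies per-step polynomiality but leaves the bound on the number of recursive calls implicit.
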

\begin{proof}
Step 1 and 2 of $\rpartial$ and step 2 of $\rtotal$ are quite clearly polynomial.
Since there are $|V|^4$ sets of four variables, step 4 of $\rpartial$ is polynomial by Lemma~\ref{lemma:local}.
Last, step 3 of $\rpartial$ is polynomial since there are at most $|V|/2$ pairs, and finding if they are part of some chain (and in which directions) can be done in polynomial time.
Hence, all steps of $\rtotal$ are doable in polynomial time.
\end{proof}

Using $\rtotal(P,f)=(T,f')$ we are now left with ${T\circ f'}$, which we can solve in polynomial time according to Lemma~\ref{lem:todotfpoly}, and if this approach returns $\mathit{true}$ then $I$ must be a 'yes'-instance (via Lemma~\ref{lem:PfyesIyes}).

Returning to the relationship between $\rtotal$ and $\rcorrect$ we are now ready to describe the property needed for ${\rtotal(P,f)\neq \rcorrect(P,f)}$ to occur.

\begin{lemma}\label{lem:RRoracleDiff}
Let $P = (V,\leq_P)$ be a {\toop} and $f$ a multi relational network over $V$ such that  $P\circ f$ is a 'yes'-instance. Then 
$\rtotal(P,f)\neq \rcorrect(P,f)$ only if $P\circ f$ contains a non-empty set $S$ of chains of length at least two and such that every link of every chain in the set overlaps with some other link in the opposite direction. 

\end{lemma}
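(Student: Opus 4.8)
The plan is to reduce the inequality $\rtotal(P,f)\neq\rcorrect(P,f)$ to the occurrence of a single \emph{unsafe} pair-orientation performed by $\rtotal$, and then to read off the required chains from the structure of the $\rpartial$-reduced instance at which that orientation happens. Both $\rtotal$ and $\rcorrect$ start by computing $\rpartial$ and agree whenever $\rpartial$ is defined; during the orientation steps (the respective second rules) they make the same commitment $x<y$ as long as that commitment is \emph{safe}, since the only extra condition in $\rcorrect$ (Definition~\ref{def:R_correct}) is precisely that $P\circ(f\cup\{x\{<,||,=\}y\})$ — which equals $P^{x<y}\circ f$ — be a yes-instance. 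Moreover, at any yes-instance a safe orientation of some pair always exists: taking a model as a relational network $g\preceq f$, for every remaining pair $\{x,y\}$ at least one direction has $g(x,y)\neq{>}$, and committing that direction deletes only the relation $>$ from that one pair while leaving all other relations of $P\circ f$ unchanged, so $g$ survives. Hence $\rcorrect$ never stops via its third rule on a yes-instance, and the two functions can differ only if, at some $\rpartial$-reduced state $(P',f')$ — still a yes-instance, since the orientation rules of $\rpartial$ commit only forced choices and its fourth rule merely enforces local consistency, which cannot destroy a model — $\rtotal$ commits a pair $\{x,y\}$ as $x<y$ with $P'^{x<y}\circ f'$ a \emph{no}-instance.

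Next I would record the consequences of $\rpartial$ being undefined at $(P',f')$: all four rules of Definition~\ref{def:R_partial} fail, so $P'$ is not a total order; every surviving pair $\{x,y\}$ satisfies $\{<,>\}\subseteq(P'\circ f')(x,y)$ (failure of rule 2, applied in both directions); every link of $P'\circ f'$ overlaps some other link in the opposite direction (failure of rule 3); and $f'$ is locally consistent on every four-element subset (failure of rule 4). The third item already delivers the overlap property demanded by the conclusion: it holds for \emph{all} links of $P'\circ f'$, hence a fortiori for the links of whatever chains I produce.

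The heart of the argument is to extract the chains from the unsafe orientation. Since $P'\circ f'$ is a yes-instance but $P'^{x<y}\circ f'$ is a no-instance, committing $x<y$ must drive a contradiction by transitive propagation; because $f'$ is already locally $4$-consistent, this contradiction cannot be confined to any four variables and must be genuinely global. I would trace the propagation along the link/chain machinery: within a link the ladder relations $a_i<a_j$, $b_i<b_j$ together with the cross-incomparabilities $||\in(P'\circ f')(a_i,b_j)$ rule out mixed rung orientations, so forcing one rung (here the pair $\{x,y\}$) forces the whole link to a single direction, i.e.\ breaks it; the chain-connecting conditions (items 3--5 of the chain definition) then carry the break to neighbouring links, and items 1--2 finally force the head/tail pair, which is required to satisfy $(P'\circ f')(\text{head},\text{tail})\setminus\{=\}=\{||\}$, to become comparable — the sought contradiction. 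Collecting the chains along which the contradiction propagates gives a non-empty set $S$, containing a chain through $\{x,y\}$. Each chain in $S$ has length at least two, because a length-one chain is a single link, which by the failure of rule 3 is not lonely, yet such a single-link chain admits a safe break (its head/tail can be oriented directly); it therefore cannot be the source of an unsafe commitment in the $\rpartial$-reduced state.

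The step I expect to be the main obstacle is this last paragraph's claim that an unsafe commitment must manifest as the breaking of a chain. Making it precise requires (a) showing that forcing a single rung propagates to break an entire link, which hinges on the relevant cross-relations being \emph{exactly} $\{||\}$ rather than merely containing $||$ — here local $4$-consistency (failure of rule 4) must be invoked to collapse the relevant relation sets to single values so that the forcing actually goes through — and (b) showing the break cannot peter out but must reach a head/tail whose required incomparability it violates. Coping with the set-valued relations throughout, and verifying that the assembled objects are bona fide chains of length $\geq 2$ meeting all five defining conditions, is where the real work lies.
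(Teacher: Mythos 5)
Your overall strategy coincides with the paper's: localize any divergence between $\rtotal$ and $\rcorrect$ to an unsafe choice at the orientation step, observe that at such a point $\rpartial$ is undefined so its rules 3 and 4 must fail (giving, respectively, that every link overlaps another link in the opposite direction, and that $f$ is already locally consistent on quadruples), and conclude that the only surviving obstruction is a non-empty set of chains of length at least two all of whose links overlap oppositely. The paper runs this as a four-way case analysis on the relation between $x$ and $y$ in a fixed solution of $P\circ f$; your ``safe/unsafe commitment'' framing is the same argument in different clothes, and the step you flag as the main obstacle (that an unsafe commitment must break a chain) is exactly the step the paper dispatches with its case-3 assertion that a pair whose orientation matters yields $u,v$ with $x\le u$, $v\le y$ and $u\,||\,v$, ``describing a link and a chain.'' So the route is not the issue.

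There is, however, one concrete step in your write-up that fails: the justification that every chain in $S$ has length at least two. You argue that a length-one chain ``admits a safe break (its head/tail can be oriented directly),'' so it cannot be the source of an unsafe commitment. But the head and tail of a chain are by definition comparable in $P$ (the chain definition requires $x<_P y$); they are not a pair of the {\toop} and cannot be oriented at step 2. Moreover, a length-one chain certainly \emph{can} be the source of an unsafe commitment: orienting one of its link's pairs the wrong way transitively forces head $<$ tail, contradicting the required incomparability, and $\rtotal$'s second rule is free to make exactly that choice. What actually excludes this case is not ``safety'' but rule 4 of $\rpartial$: for a length-one chain whose link is a single pair, the head, the tail and that pair form a quadruple on which local consistency already deletes the bad orientation, so an $\rpartial$-reduced state never presents such a pair with both orientations still open. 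This is precisely how the paper handles its case 3, and it is the ingredient missing from your argument — invoking $4$-consistency only ``to collapse relation sets'' inside the propagation sketch does not substitute for it.
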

\begin{proof}
Assume $\rtotal\neq \rcorrect$ and $S=\emptyset$.
The only step where $\rtotal$ and $\rcorrect$ differ in output is in step 2. 
Further, assume we have a topological sorting $T$ of some solution for $P\circ f$.
Whenever $\rtotal$ reaches step 2 and for a pair $x||_Py$ chooses $x\{<,=,||\}y$, without knowing if this is a good choice or not, we have four different cases that could theoretically occur and are worth considering.

\begin{enumerate}
    \item If $x=y$ or $x||y$ in our solution for $P\circ f$, then $x<_Ty$ and $y<_Tx$ are both valid and will yield the same solution.
    \item If $x<y$ in our solution, but $y<x$ in some other, but otherwise identical solution, then $x<_Ty$ and $y<_Tx$ are both valid and yield 'yes'-instances.  
    \item If $x<y$ in our solution and there is no solution with $y<x$ but which is otherwise identical, then there must be two variables $u,v$ such that $x\leq u$, $v\leq y$ and $u||v$.
    This describes a link and a chain.
    Then, either this chain has length one, in which point Step 4 of $\rpartial$ will be applicable as local consistency over $\{x,y,u,v\}$ would not allow the case of $y<x$,  
    or this chain has length longer than two, but contains a link not overlapping with any other link in the opposite direction. 
    Since there is here a link that does not overlap with any other in the opposite direction, step 3 of $\rpartial$ would be applicable and we would set $x<y$ or $y<x$ depending on which one breaks the chain.
    \item There are chains of length two or more, but no chain contains any link that does not overlap with some other link in the opposite direction.
    This matches the definition of chains in $S$, and hence $S$ is non-empty.
\end{enumerate}

In the first two of these cases $\rtotal$ and $\rcorrect$ behave identically, and hence yield the same output.
For the third one, neither $\rtotal$ nor $\rcorrect$ reaches their respective step 2, and hence they behave identically. 
For the fourth case, $S$ is non-empty.
So neither of these four cases satisfies our assumption.
But our four cases are exhaustive: they cover all relations between $x$ and $y$ in solutions for $T\circ f$ and all cases for all these relations.
So the initial assumptions must be false, meaning that either $\rtotal=\rcorrect$ or $S\neq \emptyset$, completing the proof.
\end{proof}

Before we introduce the final piece of the puzzle, we give two short definitions: one for a sub-class of {\toop}s that are easier to enumerate and one for our notation for ${\rtotal(P,f)=\rcorrect(P,f)}$.

\begin{definition}
A \emph{proper total ordering of pairs} ({\ptoop}) is a {\toop} $P=(V,\leq_P)$ such that for every $x\in V$ either there is exactly one $y\in V\setminus \{x\}$ such that $x||_P y$ or $y<_Px$ for all $y\in V\setminus \{x\}$.
\end{definition}

\begin{definition}
If for some arbitrary {\toop} $P=(V,\leq_P)$ and relational network $f$ over $V$, ${\rtotal(P,f)=\rcorrect(P,f)}$ we say that $P$ is \emph{reducible}
for $f$.
\end{definition}

And now, finally, we can show, and prove, why $\rtotal$, $\rcorrect$, reducibility and (P){\toop}s are interesting for multi relational networks for {\pot} instances.

\begin{lemma}\label{lem:binaryPartialOrderExistance}
For every 'yes'-instance $I$ of {\pot} with multi relational network $f$, there is a {\ptoop} reducible for $f$.
\end{lemma}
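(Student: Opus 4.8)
The plan is to construct the required {\ptoop} directly from a solution and then use Lemma~\ref{lem:RRoracleDiff} to certify that it is reducible. Since $I$ is a 'yes'-instance I would first fix a consistent relational network $g \preceq f$ together with its partial order $P_g$, and topologically sort $P_g$ into a total order $T$. I would then form a candidate by pairing the variables as \emph{consecutive} blocks of $T$, declaring $t_{2i-1} \,||_P\, t_{2i}$ while retaining every other comparison of $T$. Because no element of $T$ lies strictly between two consecutive elements, replacing each consecutive comparison by an incomparability cannot violate transitivity, so the result $P$ is a genuine partial order; moreover every variable is then incomparable to exactly one partner, with at most one leftover element (when $|V|$ is odd) that remains above all others, so $P$ is in fact a {\ptoop}. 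One may also wish to choose the pairing (or the solution $g$ itself, taken $||$-maximal) more carefully to control the link structure, and I would keep that freedom available.

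Next I would verify the two hypotheses needed to run the contrapositive of Lemma~\ref{lem:RRoracleDiff}. First, $P \circ f$ is a 'yes'-instance: since $P$ is a {\stub} of $T$ we have $(T\circ f)(x,y) \subseteq (P\circ f)(x,y) \subseteq f(x,y)$ for all $x,y$, and $T\circ f$ is a 'yes'-instance by Lemma~\ref{lem:IyesTyes}, so any model of $T\circ f$ also models the less constrained network $P\circ f$. With this in hand, Lemma~\ref{lem:RRoracleDiff} tells me that $\rtotal(P,f) = \rcorrect(P,f)$ — i.e.\ that $P$ is reducible for $f$ — unless $P\circ f$ contains a non-empty set $S$ of chains of length at least two every link of which overlaps some other link in the opposite direction. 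Thus the entire proof reduces to showing that the pairing can be arranged so that no such frozen set $S$ survives.

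To rule out $S$ I would exploit that the pairs of $P$ come oriented by the solution: for each undecided pair $(a,b)$ with $(P\circ f)(a,b)\setminus\{=\}=\{<,>\}$, the network $g$ fixes one of $a<_g b$ or $b<_g a$. The key claim is that a chain of length at least two all of whose links overlap other links in the opposite direction encodes a cyclic dependency among the two orientations of its pairs. Reading the solution's orientation around such a configuration — using that within each link the $a$-row and the $b$-row are strictly increasing (conditions 2–3 of the link definition), that cross-entries satisfy $|| \in (P\circ f)(a_i,b_j)$, and that consecutive links are stitched by $P\circ f(a_{i,n_i},b_{i+1,1})\setminus\{=\}=\{<\}$ — I would propagate a sequence of strict inequalities in $P_g$ that, because each pair reappears with swapped roles, eventually closes into a cycle, contradicting antisymmetry of the partial order $P_g$. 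Hence a frozen configuration cannot be realized by a network built from an actual solution, so $S$ must be empty and $P$ is reducible.

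The main obstacle is precisely this last step: converting the purely combinatorial ``every link overlaps in the opposite direction'' condition into an honest cycle in the solution's order. This will require a careful case analysis of how two links can share a pair, how the head/tail conditions (conditions 1–2 of the chain definition) attach a chain to its endpoints $x<_P y$ with $(P\circ f)(x,y)\setminus\{=\}=\{||\}$, and how the cross-relations of one link interact with the strict rows of an overlapping one; the delicate point is to track each pair's solution-orientation consistently as it is traversed in two opposite directions. I expect that taking $g$ to be $||$-maximal, so that its strict relations are genuinely forced and cannot be relaxed to $||$, will be the technical device that makes these propagations — and the final contradiction — rigorous.
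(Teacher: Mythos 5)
Your first half is sound and matches the paper's own starting point: taking a topological sort $T$ of a solution $g$, pairing consecutive elements to get a {\ptoop} $P$ that is a \stub{} of $T$, observing that $g$ still satisfies $P\circ f$, and then invoking the contrapositive of Lemma~\ref{lem:RRoracleDiff}. The genuine gap is your final step: the claim that a ``frozen'' set $S$ cannot exist when $P$ is built from an actual solution is false, so no amount of care in the case analysis will close it. The mistake is treating links as if they carried orientation requirements. Only a \emph{chain} imposes a semantic requirement (at least one of its links must be broken, via the head--tail incomparability); a link that overlaps one of your chain's links in the opposite direction need not belong to any chain at all, so traversing the overlap structure does not propagate any inequalities of $g$ and never closes into a cycle. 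Indeed, one can realize a length-two chain whose two links are each embedded, reversed, inside longer links that lie in no chain, with all the required $f$-relations witnessed by a genuine solution $g$; and even when both directions do lie in chains, the induced requirements are a satisfiable 2-SAT-like system (compare $(p\vee q)\wedge(\neg p\vee\neg q)$). Note also that Lemma~\ref{lem:RRoracleDiff} is stated under the hypothesis that $P\circ f$ \emph{is} a 'yes'-instance: frozenness is compatible with satisfiability by design; it only means the greedy rules of $\rpartial$ (the pure-literal-style rule 3 and the quad-consistency rule 4) cannot make progress, not that the instance is contradictory. Taking $g$ to be $||$-maximal does not change this.

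Because the initial {\ptoop} may genuinely be non-reducible, the paper's proof is an iterative \emph{repair} procedure rather than a one-shot construction, and this is precisely the content your proposal is missing. When a frozen set exists, the paper picks the lowest-indexed pair $(x,y)$ at which two links overlap in opposite directions, assumes w.l.o.g.\ $g(x,y)={<}$, proves (using the tails of the two chains) that there is a variable $z$ with $g(x,z)=g(z,y)=||$, and builds a new total order $T'$ in which $z$ is moved to sit between $x$ and $y$; $g$ still satisfies $T'\circ f$, and the new pairing destroys the offending links because the new pair $(x,z)$ admits $||$ and hence is not a link. The technical heart is then a termination argument: each repair either decreases the chosen index or increases the number of ``protected'' pair positions below it, so the process halts at a reducible {\ptoop}. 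If your claim were provable, this entire machinery would be unnecessary; its presence in the paper reflects exactly the counterexamples sketched above.
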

\begin{proof}
Since $I$ is a 'yes'-instance we have a relational network $g$ satisfying $I$, and by Lemma~\ref{lem:IyesTyes} we know there is a total order $T$ such that $T\circ f$ is also satisfied by $g$.

We start the process of finding a {\toop} reducible for $f$ by generating (from $T$) an arbitrary {\ptoop} $P$ such that $P$ is an \stub{} of $T$. If $P$ is reducible for $f$, then return $P$.
Otherwise if $P$ is not reducible for $f$ we assume that ${\rcorrect(P,f)=(P,f)}$.
Since $\rcorrect$ can remove chains of length one by local consistency over quads, we know by Lemma~\ref{lem:RRoracleDiff} that $P\circ f$ contains chains longer than two, and such that their links overlap with others in the opposite direction.
I.e., we have a link \[{(a_{1,1},b_{1,1})<_P\ldots<_P(a_{1,n},b_{1,n})}\] and a second link  \[{(u_{1,1},v_{1,1})<_P\ldots<_P(u_{1,m},v_{1,m})}\]
such that there is some $a_{1,i}=v_{2,i'}$ and $b_{1,i}=u_{2,i'}$.
Furthermore, we also have a link \[{(a_{2,1},b_{2,1})<_P\ldots<_P(a_{2,n'},b_{2,n'})}\] that is part of the same chain as \[{(a_{1,1},b_{1,1})<_P\ldots<_P(a_{1,n},b_{1,n})}\] and some link 
\[{(u_{2,1},v_{2,1})<_P\ldots<_P(u_{2,m'},v_{2,m'})}\] that is part of the same chain as \[{(u_{1,1},v_{1,1})<_P\ldots<_P(u_{1,m},v_{1,m})}.\]
From all such overlaps, take the overlapping pair with the lowest index in $P$ and call it $(x,y)$.
Without loss of generality, assume that $g(x,y)= {<}$ in our relational network $g$ satisfying both $I$ and $T\circ f$.
Take the variable ${y<_Tz}$ such that $g(x,z)=g(z,y)=||$, and such that there is no other variable $z'<_Tz$ fulfilling the same conditions. 
To see that such a variable $z$ actually exists, take the tail $x'$ and $y'$ of the chains overlapping in opposite directions in $(x,y)$.
Since the chain which includes $x$ and $x'$ must be broken, and similarly for $y$ and $y'$, we must have that $g(x,x')=||$ and $g(y,y')=||$.
Assume that $g(y,x')=<$.
But since ${g(x,y)=g(y,x')=<}$ then $g(x,x')=<$ while we assumed $g(x,x')=||$, so we have a contradiction.
Hence, we have ${g(x,x')=g(x',y)=||}$ and so at least one variable with this property exists and so does our $z$.
In $T$ there is now a sequence ${x<_Ty<_Tx_1<_T\ldots<_Tx_{j}<_Tz}$.
Since $z$ is incomparable to both $x$ and $y$ in $g$ and as $z$ is the least indexed such variable in this sequence, then for all $x_i$ in the sequence we have $g(z,x_i)=||$.
Hence, if we construct a new total order $T'$ such that ${x<_{T'}z<_{T'}y<_{T'}x_1<_{T'}\ldots<_{T'}x_{j}}$ but which is otherwise identical to $T$, then $g$ also satisfies $T'\circ f$.
If we now also construct a new arbitrary {\toop} $P'$ from $T'$ our two original links are no longer links for any chains.
We repeat this process until the resulting {\toop} $P'$ is reducible for $f$.

The question is now if the above construction of a reducible {\ptoop} for $f$ halts.
To answer this we make the following observation:
for each recursion we choose some index $i$ in $P$ such that the pair at $i$ in $P\circ f$ is changed from a link to a non-link in $P'\circ f$, without making any new links out of pairs of index less than $i$ that have previously been chosen.
Hence the number of pairs at index $i$ or less that cannot become links by choosing a new index $j>i$ has increased by one.
Call the pairs at these indexes selected \emph{protected}.
Note that a pair at index $i$ loses its protected status and can become a link again if $j<i$ is chosen, but then the same logic applies to $j$ instead: the number of protected pairs at index $j$ or less has increased.
So, in each iteration, either $i$ decreases or the number of protected pairs indexes less than $i$ increases.
Hence, the function must reach a point where no pair can be selected and $P$ is reducible for $f$.
\end{proof}

One may ask what happens to chains in $P\circ f$ that are satisfied by equality in the solution and are hence technically never broken.
For all of the links in these chains, the local consistency check in Step 4 of $\rpartial$ will keep the equality relation, while Step 2 in $\rtotal$ will always produce a 'yes'-instance given a 'yes'-instance when working on the pairs in these links.
Hence, these are a non-issue and will be satisfied by our greedy approach
(as long the variables are equal in some solution).

We now have everything we need to present the main result for this section:

\begin{theorem}\label{thm:potfinal}
Any arbitrary {\pot} instance $I=(V,C)$ with $|V|=n$ is solvable in $\Ordo^*(n!/2^\frac{n}{2}) \subseteq \Ordo^*((0.2601n)^n)$ time and $poly(||I||)$ space.
\end{theorem}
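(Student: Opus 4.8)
The plan is to exhibit an enumerate-and-greedily-complete algorithm and then to bound the number of objects enumerated. First I would fix the canonical multi relational network $f$ for $I$, setting $f(x,y) = c(x,y)$ whenever $c(x,y)\in C$ and $f(x,y) = \{<,>,||,=\}$ otherwise, so that consistent relational networks $g \preceq f$ correspond exactly to models of $I$. The algorithm then iterates over all {\ptoop}s $P = (V,\leq_P)$; for each one it computes $\rtotal(P,f) = (T,f')$ (a total order $T$ extending $P$ together with a network $f'\preceq f$) and solves $T\circ f'$ in polynomial time via Lemma~\ref{lem:todotfpoly}. It accepts as soon as some $T\circ f'$ is a 'yes'-instance and rejects if no {\ptoop} yields one. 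By Lemma~\ref{lem:Rpoly} each evaluation of $\rtotal$ takes polynomial time and space, so a single iteration is polynomial.

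For soundness, suppose the algorithm accepts via a {\ptoop} $P$ with $\rtotal(P,f) = (T,f')$ and $T\circ f'$ a 'yes'-instance. Since $\rtotal$ only extends the order and only shrinks the network (only step~4 of $\rpartial$ alters $f$, and it returns $f'\preceq f$), the order $T$ is a genuine total order and $f'$ is still a multi relational network for $I$; applying Lemma~\ref{lem:PfyesIyes} to $T$ and $f'$ then shows $I$ is a 'yes'-instance. For completeness, assume $I$ is a 'yes'-instance. Lemma~\ref{lem:binaryPartialOrderExistance} supplies a {\ptoop} $P$ that is reducible for $f$, i.e. $\rtotal(P,f)=\rcorrect(P,f)$; moreover $P$ is produced there as a \stub{} of a total order witnessing a model, and since passing to a \stub{} only coarsens $P\circ f$, the instance $P\circ f$ is itself a 'yes'-instance. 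As $\rcorrect$ preserves 'yes'-instances through its second rule (the remark following Definition~\ref{def:R_correct}), $\rcorrect(P,f)=(T,f')$ has $T\circ f'$ a 'yes'-instance, and reducibility gives $\rtotal(P,f)=(T,f')$ as well, so the iteration that processes $P$ accepts. Hence the algorithm is correct.

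It remains to count the {\ptoop}s and convert the bound. A {\ptoop} is exactly an ordered sequence of the $n/2$ unordered pairs obtained from a partition of $V$ (with one extra global-maximum singleton when $n$ is odd, which only changes the count by a constant factor). Listing the elements of $V$ in a line gives $n!$ arrangements, and dividing by the $2^{n/2}$ internal orderings of the pairs that produce the same {\ptoop} (recall $(a,b)$ and $(b,a)$ are identified) yields exactly $n!/2^{n/2}$ of them. Enumerating them one at a time in polynomial space, at polynomial cost per iteration, gives total time $\Ordo^*(n!/2^{n/2})$ and $poly(||I||)$ space. Finally, Stirling's approximation gives $n!/2^{n/2} = \sqrt{2\pi n}\,(1+o(1))\,(\frac{n}{e\sqrt2})^n$, and since $\frac{1}{e\sqrt2} \approx 0.2601$ the polynomial prefactor is absorbed into $\Ordo^*$ to give $\Ordo^*((0.2601n)^n)$.

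The main obstacle is the completeness step, not the counting. The function $\rtotal$ that we actually run is deliberately \emph{not} sound: given a 'yes'-instance $P\circ f$ it may commit to a bad extension in its second rule and return a 'no'-instance $T\circ f'$. The whole argument therefore rests on the earlier structural work showing that for a 'yes'-instance there is always \emph{some} {\ptoop} on which $\rtotal$ cannot go wrong, namely one on which it coincides with the solution-preserving $\rcorrect$ (Lemma~\ref{lem:binaryPartialOrderExistance}, underpinned by the characterization of $\rtotal \neq \rcorrect$ in Lemma~\ref{lem:RRoracleDiff}). Once this is granted, blind enumeration of {\ptoop}s is guaranteed to encounter a good one, and the remaining work is the routine perfect-matching count together with a single application of Stirling's formula.
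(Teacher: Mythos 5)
Your proposal is correct and follows essentially the same route as the paper's own proof: enumerate all {\ptoop}s, run $\rtotal$ and solve the resulting total-order instance via Lemma~\ref{lem:todotfpoly}, with soundness from Lemma~\ref{lem:PfyesIyes}, completeness from Lemma~\ref{lem:binaryPartialOrderExistance} (via reducibility and the 'yes'-preservation of $\rcorrect$), and the $n!/2^{n/2}$ count plus Stirling. In fact you spell out several steps the paper leaves implicit (soundness when $\rtotal$ outputs a refined network $f'\preceq f$, why the reducible {\ptoop} gives a 'yes'-instance, and the justification of the {\ptoop} count including odd $n$), so no gaps beyond those already present in the paper's own exposition.
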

\begin{proof}
By Lemma~\ref{lem:PfyesIyes} we know that if we find any partial order $P$ such that $P\circ f$ is a 'yes'-instance, then $I$ must be a 'yes'-instance.
From Lemma~\ref{lem:binaryPartialOrderExistance} we know there exists a {\ptoop} $P$ such that $P\circ f$ is a 'yes'-instance if $I$ is and such that we can solve $P\circ f$ in polynomial time by applying Lemma~\ref{lem:todotfpoly} to the output of $\rtotal(P,f)$, and by Lemma~\ref{lem:Rpoly} $\rtotal(P,f)$ can be calculated in polynomial time.
There are $n!/2^\frac{n}{2}$ such {\ptoop}s, and we can, in a polynomial factor on $n$, enumerate over all {\ptoop}s over $V$ until we find our $P$.
If no 'yes'-instance $P\circ f$ is found in this way then $I$ must be a 'no'-instance, again by Lemma~\ref{lem:binaryPartialOrderExistance}, and we can safely answer 'no'.
Via Stirling's approximation of $n!$ we obtain $\Ordo^*((0.2601n)^n)$.
\end{proof}

With Theorem~\ref{thm:potfinal} in place, we have now seen the entire chain of reasoning of how enumerating {\ptoop}s is enough to solve any {\pot} instance. While providing a new state of the art.

\section{Conclusion and Discussion}\label{sec:conc}

In this paper we used structural properties on potential solutions for {\pot} instances to achieve a new and significantly improved upper bound.
These results, and the techniques used, raise new questions.

A major open question is whether our algorithm for {\pot } can be adapted to other qualitative reasoning tasks such as RCC-5 or RCC-8.
The issue here seems to be that the constraints \emph{partially overlapping} and \emph{disjoint} are harder to handle than just incomparability.
Combined, however, these relations behaves identical to incomparability, so the idea is by no means far-fetched.
Partially overlapping enforces a form of upwards transitivity in that every set containing a set $A$ must also be at least partially overlapping with some set $B$ if $A$ and $B$ are partially overlapping.
Similarly, disjoint enforces a form of downwards transitivity in that every set contained in $C$ must be disjoint to $D$ if $C$ is disjoint to $D$.
Effectively this introduces some different forms of chains than for {\pot} that require some novel ideas to handle.
As a first step, finding a method to handle these chains faster than $2^n$ for $n$ links would be enough to improve the current state-of-the-art.

It is also natural to ask if it is possible to avoid enumerating certain orderings of pairs and thus push down the runtime even further.
For example, can we find certain orderings of pairs that will never satisfy our instances?
And can we do this fast enough, and often enough, that it yields a significant improvement to the overall runtime?
Another promising idea is to  partition variables into triples instead of pairs, or quads
that are later further restricted to pairs in an intelligent manner.
This requires non-trivial changes to our algorithm but does not seem impossible.
If this approach can be pushed further, it would likely hit a limit of $\Ordo^*(\sqrt{cn}^n)$ where the local consistency checks would start bottlenecking.
Even so, this would be the first $o(n)^n$ complexity result for {\pot} and a major advance in solving NP-hard qualitative reasoning problems.

\section*{Acknowledgements}
The second author is partially supported by the Swedish research council (VR) under grant 2019-03690.

\bibliographystyle{abbrv}
\bibliography{bib}

\end{document}